\documentclass[runningheads]{llncs}
\usepackage{algorithm}
\usepackage{algorithmic}
\usepackage{booktabs}
\usepackage{hyperref}
\usepackage{textcomp}
\usepackage{todonotes}
\usepackage{forest}
\usepackage{xspace}
\usepackage{bbding}

\usepackage[T1]{fontenc}

\def\orcidID#1{\href{http://orcid.org/#1}{\raisebox{-1.25pt}{\includegraphics{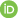}}}}

\newcommand{\leancop}{\textsf{leanCoP}\xspace}
\newcommand{\meancop}{\textsf{meanCoP}\xspace}
\newcommand{\meancopcut}{\textsf{!meanCoP}\xspace}
\newcommand{\hopcop}{\textsf{hopCoP}\xspace}

\usepackage{color}

\begin{document}
\title{\texorpdfstring{Constraint Learning for\\Non-Confluent Proof Search}{Constraint Learning for Non-Confluent Proof Search}}
\author{
Michael~Rawson\inst{1}\textsuperscript{(\Envelope)}\orcidID{0000-0001-7834-1567} \and
Clemens~Eisenhofer\inst{2}\orcidID{0000-0003-0339-1580} \and
Laura~Kov\'acs\inst{2}\orcidID{0000-0002-8299-2714}
}
\authorrunning{Rawson et al.}
\institute{
University of Southampton, Southampton, UK\\
\email{michael@rawsons.uk}
\and
TU Wien, Vienna, Austria\\
\email{\{clemens.eisenhofer,laura.kovacs\}@tuwien.ac.at}
}
\maketitle
\begin{abstract}
Proof search in non-confluent tableau calculi, such as the connection tableau calculus, suffers from excess backtracking, but simple restrictions on backtracking are incomplete. We adopt \emph{constraint learning} to reduce backtracking in the classical first-order connection calculus, while retaining completeness. An initial constraint learning language for connection-driven search is iteratively refined to greatly reduce backtracking in practice. The approach may be useful for proof search in other non-confluent tableau calculi.
\keywords{Constraint Learning \and Connection Tableaux \and Backjumping}
\end{abstract}

\section{Introduction} State-of-the-art methods for automated theorem proving are based on exhaustive search, using a \emph{proof calculus} to explore the space of possible proofs.
The search for proofs can be either backtracking or non-backtracking in nature.
Backtracking search is required when the underlying calculus allows search to become ``stuck'' because of choices made previously in the search.
These previous choices must be undone, and an alternative choice made, in order for the search to continue, which we call \emph{backtracking}.

Some calculi do not require backtracking, such as confluent tableau calculi.
Calculi like superposition and instance generation also fall into this category.
Backtrack-free calculi are sometimes preferred and often enjoy theoretical advantages.
However, in some cases, a non-confluent calculus is more practically effective, or is preferred for some other reason.
We are therefore interested in \emph{improving the behaviour of backtracking proof search}.

Backtracking search is not a problem inherently: state-of-the-art SAT solvers are uniformly based on backtracking procedures.
Problems arise when the backtracking behaviour is pathological, backtracking too little, and trying to close the same goals again when the root cause of the dead end has not changed.
We note in passing that backtracking \emph{too much} would also cause problems.
In SMT solving, for example, adding or removing theory literals to or from their respective decision procedures is a relatively expensive operation that should be avoided if possible, which is why smaller backtracking steps are preferred.

We address this here by adapting a technique called \emph{constraint learning}\footnote{~better known as \emph{clause learning} in the context of Boolean satisfiability} from the constraint satisfaction community.
During the search for a closed tableau, we sometimes arrive at a dead end where no further inferences are applicable in the tableau calculus.
At this point, we analyse the \emph{reason} that no inference is applicable and \emph{learn} a constraint clause that prevents us from arriving at a similar tableau that is stuck for the same reason.
The accumulating constraint database helps to guide the search towards more promising areas, or eventually shows that no closed tableau exists.
\\
\\
\noindent
\fbox{
\begin{minipage}{\textwidth}
\textbf{A potential source of confusion.}
Readers familiar with SAT solving, instance generation and/or refutational theorem proving may suspect that we are learning consequences of the input problem, and that if we derive an obviously unsatisfiable constraint like the empty clause or $0 = 1$, the input problem is unsatisfiable.
This is \emph{not} the case: we are learning constraints about the search space, and such constraints show that there is no closed tableau to be found (at a particular resource bound).
\end{minipage}
}

\section{Background and Motivation}
\begin{figure}[t]
    \centering
    \begin{forest}
        [,
            [\(L_1\),]
                [\(L_2\),]
                [\dots,]
                [\(L_n\),]
        ]
    \end{forest}
    \hspace{.4in}
    \begin{forest}
        [\dots
        [\(L\), name=mate
        [\dots
        [\(L_1\), name=goal]
        [\dots]
        ]
        [\dots]
        ]
        [\dots]
        ]
        \draw[dashed] (goal) to[out=north west, in=west] (mate);
    \end{forest}
    \hspace{.4in}
    \begin{forest}
        [\dots
        [\(L\), name=goal
        [\(L_1\), name=mate]
        [\(L_2\),]
        [\dots,]
        [\(L_n\),]
        ]
        [\dots]
        ]
        \draw[dashed] (mate) to[out=north west, in=west] (goal);
    \end{forest}
    \caption{The three inference rules of the clausal connection tableau calculus: \emph{start}, \emph{reduction}, and \emph{extension}. In the \emph{start} and \emph{extension} rules, \(C = L_1 \vee L_2 \vee \dots \vee L_n\) is a clause from the input set, with its variables renamed apart from the tableau. In the \emph{reduction} and \emph{extension} rules we require that \(\sigma(\lnot{L}) = \sigma(L_1)\), i.e. $L$ and $L_1$ are \emph{connected} (shown with dashed lines).}
    \label{fig:rules}
\end{figure}
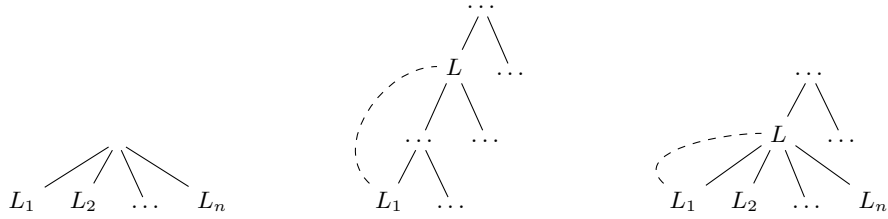

We assume familiarity with tableau methods~\cite{handbook-of-tableau-methods} and classical first-order logic.
The connection method~\cite{bibel}, connection tableau calculus~\cite{handbook}, model elimination~\cite{model-elimination}, and the method of matings~\cite{matings} are closely related proof search methods.
We will present our work in the language of connection \emph{tableaux}, given the audience.
The connection refinement demands that any addition to an open branch is \emph{connected} (contains a literal of opposite polarity) to the leaf literal, which produces a very strong goal-directed effect.
This comes at the expense of confluence, even for propositional logic: choosing the wrong extension of the tableau can prevent closing the tableau.

By lifting propositional connection tableau to a free-variable tableau calculus with a global substitution $\sigma$, one obtains a complete calculus for classical first-order logic~\cite{handbook}.
We show the three inference rules of the clausal connection tableau calculus in Figure~\ref{fig:rules}: \emph{start} adds a clause to an empty tableau, \emph{reduction} closes a branch by connecting a leaf literal to a literal on its path, and \emph{extension} adds a clause that is connected to an open branch.
With minor modifications, the connection method can be adapted to other logics such as intuitionistic~\cite{leancop} or modal~\cite{mleancop} logics, or to non-clausal proof search~\cite{nanocop}.

The simplicity of the calculus admits very compact theorem provers, often making use of Prolog and related technology~\cite{leancop,cop-survey,pttp}.
While tableau methods are no longer the state of the art in classical first-order theorem proving, they are still competitive for proving conjectures in the presence of large numbers of irrelevant axioms (a key application for interactive theorem provers~\cite{sledgehammer}) or in specialised settings~\cite{sgcd}.
They have also found a new home in experiments applying machine learning to theorem proving~\cite{rlcop,markov,plcop}, where their simplicity --- and to some extent their backtracking --- makes them an attractive choice.
We assume here that equality has been preprocessed away from the input~\cite{handbook-ar-paramodulation,brand,lazy-paramodulation-practice}, although it is possible to extend connection calculi with support for equational reasoning~\cite{bibel,lazy-paramodulation,breu}.

\subsection{Excess Backtracking in Connection-Driven Search}
In order to remain complete, propositional connection systems must consider alternative additions to the tableau, but once a branch has been closed, it can remain so.
At the first-order level, \emph{alternative ways to close the same branch} must also be considered: this is because closing a branch may bind variables in the global substitution in a way that prevents closing a different branch later.

Otten noticed that this requirement produced an enormous amount of backtracking in some cases~\cite{restricted-backtracking}.
He introduced a Prolog \emph{cut} into \leancop's search routine, rendering it incomplete but significantly reducing backtracking and increasing performance in many cases.
Later, Färber studied the behaviour of these cuts extensively and developed several variants~\cite{meancop}.
All of Färber's variants are incomplete, but some are considerably more effective in practice than others.
We will use his \meancop system as a point of comparison in Section~\ref{sec:implementation}.

\subsection{Terminology and Convention}
We use some terms informally, which we hope will aid understanding rather than cause confusion.
As tableau-based first-order theorem provers typically manipulate a tableau and a substitution together, we will refer to both of them simply as ``the \emph{tableau}'' where it is not confusing.
When an inference of the calculus is attempted but cannot be successfully applied, we say it has \emph{failed}.
Moreover, if we can detect that a tableau can never be closed, we say it is \emph{stuck}.
Both failed inferences and stuck tableaux may be \emph{explained} in terms of a constraint, which we call a \emph{reason}.

We will need to refer explicitly to particular positions in a tableau. We use the obvious scheme where for any position $p$, $p.i$ is the position at the $i$\textsuperscript{th} branch below $p$. The empty position stands for the root of the tableau.
First-order variables in a tableau are named according to the \emph{position} below which their clause is attached, and are hence \emph{consistent} across backtracking.
We simply use $u$, $v$, $w$, $x$, $y$ and $z$ for first-order variables, $c$ and $d$ for constants, $f$ for a function, and $P$, $Q$, $R$, and $S$ for predicates.
Finally, we elide parentheses in terms and literals and consider $\lnot \lnot L$ identical to literal $L$.

\subsection{Constraint Learning}
Constraint learning~\cite{constraint-learning} is a well-known but somewhat vaguely defined approach in constraint satisfaction and artificial intelligence.
It is not necessary for our purposes to formally define constraint learning nor explore all of its developments, but the core of the idea is as follows.
In a backtracking search for a solution to a set of constraints, we may encounter a dead end, where making a step in any available direction violates some constraint.
A subset of the search's previous decisions may be blamed for this situation by a justification extraction process. A constraint enforcing that not all of the elements within the justification may be selected simultaneously is added to the constraint set, which prevents search from running into a similar unfortunate situation again.
This \emph{learned} constraint (usually in the form of conflict clauses) can also be used to do \emph{backjumping}: backtracking by more than one level.

Constraint learning was particularly effective for Boolean satisfiability~\cite{grasp} (SAT) solving and is the basis for modern \emph{conflict-driven clause learning} (CDCL) SAT solvers and therefore satisfiability modulo theory (SMT) solvers~\cite{SMT}.

\subsection{Running Example}
We use a particular example throughout the paper. Consider the following set of first-order clauses:
\begin{equation}
\label{eq:c1}
\forall xyz.~Px \vee Qy \vee Rxy \vee Pz
\end{equation}
\begin{minipage}{.48\linewidth}
\begin{equation}
\label{eq:c2}
\forall x.~\lnot Px \vee S
\end{equation}
\begin{equation}
\label{eq:c3}
\lnot S \vee \lnot Pc
\end{equation}
\begin{equation}
\label{eq:c4}
\lnot Qd
\end{equation}
\end{minipage}
\hfill
\begin{minipage}{.48\linewidth}
\begin{equation}
\label{eq:c5}
\lnot Pfc
\end{equation}
\begin{equation}
\label{eq:c6}
\forall x.~\lnot Rxc
\end{equation}
\begin{equation}
\label{eq:c7}
\forall x.~\lnot Rdx
\end{equation}
\end{minipage}
\\
\\
\noindent
Figure~\ref{fig:running} shows a connection tableau built from this set of clauses.
It has a single open branch $Rxy$, shown boxed at position~3. The current substitution is $\{~x \mapsto c,~y \mapsto d,~z \mapsto fc,~w \mapsto x~\}$. The only available extension steps for $Rxy$ are $\lnot Rdv$ or $\lnot Rvc$. The tableau is stuck, as neither are possible. At an earlier stage of construction, before $x \mapsto c$ and $y \mapsto d$, both extensions would have been possible.
Note that the way in which $Pz$ at position 4 is closed is irrelevant, while the sub-tableaux at 1 and 2 contribute to the dead end.

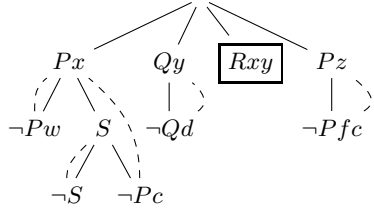
\begin{figure}
\centering
\begin{forest}
[,
	[$Px$, name=px
		[$\lnot Pw$, name=npw]
		[$S$, name=s
			[$\lnot S$, name=ns]
			[$\lnot Pc$, name=npc]
		]
	],
	[$Qy$, name=qy
		[$\lnot Qd$, name=nqd]
	],
	[\fbox{$Rxy$}, name=nrxy]
	[$Pz$, name=pz
		[$\lnot Pfc$, name=npfc]
	]
]
\draw[dashed] (npw) to[out=north, in=south west] (px);
\draw[dashed] (ns) to[out=north, in=south west] (s);
\draw[dashed] (npc) to[out=north, in=south east] (px);
\draw[dashed] (nqd) to[out=north east, in=south east] (qy);
\draw[dashed] (npfc) to[out=north east, in=south east] (pz);
\end{forest}
\caption{Running example: a connection tableau built from clauses \ref{eq:c1}--\ref{eq:c7}.}
\label{fig:running}
\end{figure}

\section{Learning Constraints}
\label{sec:constraints}
We propose learning and storing constraints during proof search in the connection tableau calculus in order to prevent us from repeatedly reaching dead ends for similar reasons.
We will now define a constraint language to explain why no inference step is possible in a given situation, which will allow us to design an improved search procedure in Section~\ref{sec:search-with-constraints}.
Suppose that a particular inference step would normally be applicable to an open branch in the tableau.
If this step is not applicable, it must be that some rule applications elsewhere in the tableau prevented it.
We therefore define our constraint language to be based on the inference rules of the connection tableau calculus.
We will refine this language in Section~\ref{sec:constraint-refinements}, but for now, consider the following definition.
\newcommand{\start}[1]{\mathcal{S}_{#1}}
\newcommand{\reduce}[2]{\mathcal{R}^{#2}_{#1}}
\newcommand{\extend}[3]{\mathcal{E}^{#1}_{#2/#3}}
\begin{definition}[Simplified Constraint Language]
\emph{Constraints} are sets of atoms. Each atom is either:
\begin{enumerate}
\item $\start{C}$, representing starting the tableau with clause $C$; 
\item $\reduce{p}{q}$, representing a reduction from position $p$ to an ancestor $q$ in the tableau;
\item $\extend{p}{C}{i}$, representing extending position $p$ by a connection to the $i$\textsuperscript{th} literal of clause $C$.
\end{enumerate}
\end{definition}
Note that each atom includes the open goal (or root) to which the step is applied.
This language is sufficient to explain why an inference $j$ that would be possible otherwise is currently not possible within the tableau and describes this situation in a way to cover a whole class of similarly affected tableaux.
This is done by finding a subset of the inference steps already applied to the tableau that prevent the application of inference $j$.
\begin{definition}[Reasons for failed inferences]
\label{def:inference-reason}
Take an open branch $B$ in a tableau $T$ constructed by a series of inferences $I$.
We construct a sub-tableau $T'$ by applying only those inferences $I' \subseteq I$ which are necessary to produce $B$, i.e. the start clause and a series of extensions along the path to $B$.
Suppose that there is an inference $j$ that can be applied to $B$ in $T'$ but not in $T$.
A \emph{reason} for failing to apply $j$ in $T$ is a minimal set $E \subseteq I \setminus I'$ which, if applied additionally to $T'$, prevents applying $j$ at $B$ in the resulting tableau.
\end{definition}
\begin{example}[Reasons for inference failure 1]
\label{ex:inference1}
Consider the tableau in Figure~\ref{fig:running}.
To close it, the remaining open branch $Rxy$ must be extended.
Suppose that we wish to extend it with clause \ref{eq:c7}.
This would have been initially possible, but by now the global substitution contains $x \mapsto c$ and the extension is impossible.

We can explain this in our language by noticing that the minimal set of previous inferences required to make the extension impossible are those that close the branch at position 1, $Px$.
The other two branches at 2 and 3 are irrelevant, even though those branches are also closed and affect the global substitution.
If we take the minimal set of previous inferences, we obtain $\{~ \start{\ref{eq:c1}}, \extend{1}{\ref{eq:c2}}{1}, \extend{1.2}{\ref{eq:c3}}{1}, \reduce{1.2.2}{1} ~\}$
\end{example}
\begin{example}[Reasons for inference failure 2]
\label{ex:inference2}
We return to the tableau in Figure~\ref{fig:running}, but now consider extending the open branch with the unit clause \ref{eq:c6}.
Again, we notice that only the branch $Qy$ at position 2 is relevant for explaining why this extension fails, and produce the reason $\{~ \start{\ref{eq:c1}}, \extend{2}{\ref{eq:c4}}{1} ~\}$.
\end{example}
Now we turn our attention to explaining why a tableau is stuck.
This has two parts: stating that there is an open branch $B$, and showing that no inference can be applied to $B$.
\begin{definition}[Reasons for stuck tableaux]
\label{def:tableau-reason}
Take $T$, $B$, $I$, $T'$ and $I'$ from Definition~\ref{def:inference-reason}, and take $J$ to be the set of possible inferences from $B$ in $T'$.
Suppose that no $j \in J$ can be applied to $B$ in $T$.
We compute the {set of reasons} $R_j$ for each inference $j$ as follows:
\begin{enumerate}
\item If the calculus \emph{prevents} $j$ in $T$, we compute an explanation $R_j$ by Definition~\ref{def:inference-reason}.
\item If $j$ is \emph{applicable but leads to a tableau that is also stuck}, we compute a reason $R'$ for that tableau recursively and set $R_j = R' \setminus \{~ j ~\}$.
\end{enumerate}
$I'$ describes $B$ as an open branch, and the union of all inference failure reasons $R_j$ shows that no inference can be applied to $B$.
We define
\[ \bigcup_j R_j \cup I' \]
to be a reason that $T$ is \emph{stuck}.
\end{definition}
\begin{example}[Reasons for stuck tableaux]
Consider once again the tableau in Figure~\ref{fig:running} and its open branch.
If the only possible extensions are with clauses \ref{eq:c6} and \ref{eq:c7}, the tableau is stuck as neither can be applied here.
Using the reason set from Examples \ref{ex:inference1} and \ref{ex:inference2} and noting that $I'$ is simply $S_{\ref{eq:c1}}$, we obtain
\[\{~ \start{\ref{eq:c1}},~ \extend{1}{\ref{eq:c2}}{1},~ \extend{1.2}{\ref{eq:c3}}{1},~ \reduce{1.2.2}{1},~ \extend{2}{\ref{eq:c4}}{1} ~\}\]
as a reason for $T$ being stuck.
\end{example}
In general, reasons are not unique for any given stuck tableau, both because there could be more than one open branch, and because there could be more than one reason for a failed inference.
Choosing one reason suffices, but some are likely to be stronger than others.

\section{Search with Learned Constraints}
\label{sec:search-with-constraints}
\begin{algorithm}[t]
\caption{An iterative search routine for finding closed tableaux\label{alg:procedure}.}
\begin{algorithmic}
\STATE $T \gets $ empty tableau
\STATE constraints $\gets \emptyset$
\STATE trail $\gets $ nil
\REPEAT
	\STATE success $\gets$ \FALSE
    \STATE $B$ $\gets$ \texttt{select\_open\_branch}($T$)
	\STATE learn $\gets$ \texttt{explain}($T$, $B$)
	\FORALL{possible inferences $j$ at $B$ in $T$}
		\IF{\NOT \texttt{apply}($T$, $j$)}
			\STATE learn $\gets \mathrm{learn} \cup \mathtt{compute\_reason}(T, j)$
		\ELSIF{there is a conflict clause $C \cup \{j\}$ violated by $j$ and the trail}
			\STATE learn $\gets \mathrm{learn} \cup C$
		\ELSE
			\STATE trail $\gets j :: $ trail
			\STATE success $\gets$ \TRUE
			\STATE \textbf{break}
		\ENDIF
	\ENDFOR{}
	\IF{\NOT success}
		\WHILE{learn is violated}
			\STATE $i \gets $ pop(trail)
			\STATE undo($i$, $T$)
		\ENDWHILE
		\STATE \texttt{record\_learned\_clause}(learn)
	\ENDIF
\UNTIL{$T$ is closed or learn is empty}
\end{algorithmic}
\end{algorithm}
\noindent
In the previous section, we defined the constraint language so that stuck tableaux can be adequately explained.
The search algorithm should now be redesigned to make use of these learned constraints.
We implement something similar to that found in CDCL SAT solvers, SMT solvers, or constraint satisfaction systems.
Alongside the current tableau, we maintain a \emph{trail} of atoms that are true for the current tableau.

The search algorithm repeatedly applies rules, gets stuck, learns a constraint, and backtracks. It terminates when the tableau is closed or the empty constraint is learned.
We maintain the invariant that no learned constraint is \emph{violated} by the current trail: a constraint is violated if all of its atoms are contained in the trail.
In case the tableau is empty, a start clause is chosen. Otherwise, at each iteration, an open branch of the tableau is selected for reduction or extension.
If any such inference is possible, it is applied to the tableau, and the corresponding atom describing the result of the inference is added to the trail.
On the other hand, an inference $j$ may be impossible either:
\begin{enumerate}
\item because the calculus does not permit it, or
\item because adding its corresponding atom would violate a learned clause.
\end{enumerate}
In the first case, a reason for the failed inference is computed as in Definition~\ref{def:inference-reason}.
Otherwise, the reason for its failure is the learned constraint it would violate, \emph{minus} the atom corresponding to $j$ (Definition~\ref{def:tableau-reason})
In this way, when all possible inferences at an open branch have failed, a constraint against the stuck tableau is learned such that all the atoms in the constraint are on the trail.
To restore the invariant, the system backtracks until at least one violated atom is no longer on the trail.

The overall procedure is shown in Algorithm~\ref{alg:procedure}: \texttt{compute\_reason} computes a reason in the sense of Definition~\ref{def:inference-reason}, and \texttt{explain} computes $I'$ for $B$ as in Definition~\ref{def:tableau-reason}.
It is considerably more complex than the usual procedures for finding closed connection tableaux, particularly those embedded in Prolog via the ``lean'' methodology~\cite{leancop}.
However, it is simpler in one aspect: there is no need to remember alternative inferences at backtracking points, which can be quite involved if not implemented in terms of Prolog's existing backtracking mechanism~\cite{bare-metal-tableaux}.

\subsection{Resource Bounds}
Connection systems typically search by iterative deepening on a particular metric, such as the length of the longest branch.
A small limit is set initially, and then a system will begin search, bounded by the current limit.
If no tableau exists at one iterative deepening level, the limit is increased and search tried again.
We follow this approach here: our search algorithm looks for a tableau bounded by some maximum branch length, and if one does not exist, it will eventually terminate by learning the empty clause.

Constraints learned at one iterative deepening level cannot be reused for the next, as our approach would become incomplete.
It is possible to alter the constraint language in order to express constraints that (do not) depend on the depth limit, but this is of limited practical purpose for at least two reasons.
Firstly, because the search space at the next iterative deepening level tends to be much larger than the exhausted previous level, reusing constraints independent of the depth limit from the previous level does not help much.
Second, in practice, there are few such constraints.

\subsection{Soundness, Termination and Completeness}
We show that Algorithm~\ref{alg:procedure} terminates at any fixed depth limit and use this to show completeness.
Soundness is trivial, as the routine searches within an existing sound calculus.
\begin{lemma}[Termination]
Fix a depth limit. Algorithm~\ref{alg:procedure} terminates.
\end{lemma}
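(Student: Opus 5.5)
The plan is the standard CDCL termination argument. With a fixed depth limit, and with variables named by position (so consistent across backtracking), only finitely many atoms $\start{C}$, $\reduce{p}{q}$ and $\extend{p}{C}{i}$ can ever occur. There are finitely many input clauses. Every position has bounded depth, and has bounded branching because clauses have bounded length. Reductions only go to ancestors. Hence the universe $\mathcal{A}$ of atoms is finite, and so is the set of possible constraints, namely the subsets of $\mathcal{A}$. The tableau is determined by its trail, so there are also only finitely many trails and tableaux.

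The first step is to show that each iteration of the \textbf{repeat} loop terminates. The \textbf{for} loop ranges over the finitely many inferences at $B$. The \textbf{while} loop pops the finite trail; it stops because the empty trail violates no nonempty constraint. The recursive reason computation in Definition~\ref{def:tableau-reason} is never actually invoked by the algorithm, which instead uses already-recorded clauses.

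The second step is a progress measure. Between two learning events, every iteration pushes an inference onto the trail. The trail length is bounded because every extension consumes an open branch below the depth limit, and a tableau with bounded depth and branching has finitely many positions. So after finitely many iterations either the tableau is closed, which terminates the algorithm, or some branch is stuck and a clause is learned.

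The key claim is that each learned clause is new, i.e.\ not already in the database. When learning happens, every atom of \emph{learn} lies on the current trail, as argued in the text: $I'$ is on the trail, the reasons from \texttt{compute\_reason} are subsets of applied inferences, and each $C$ taken from a violated conflict clause $C\cup\{j\}$ has $C$ on the trail. So \emph{learn} is violated by the current trail. By the invariant, however, no recorded clause is violated by the trail. Therefore \emph{learn} is not among the recorded clauses. Since there are only finitely many subsets of $\mathcal{A}$, only finitely many learning events can happen. Combined with the previous step, this bounds the total number of iterations. Once no new clause can be learned, the next stuck branch must produce the empty \emph{learn}, or a closed tableau is found, so the \textbf{until} condition fires.

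The main obstacle is the invariant itself. I need to check that after the backjump loop and \texttt{record\_learned\_clause}, no recorded constraint is violated. The new clause is fine because the loop pops until \emph{learn} is not violated. Older clauses stay fine because popping only shrinks the trail. Pushing $j$ is allowed only after checking that no clause $C\cup\{j\}$ is violated. Two edge cases need care: the empty tableau, where a start clause must be chosen and $I'$ is empty, and the case where \emph{learn} is empty. In the latter case the backjump would empty the trail, and the algorithm should stop immediately rather than loop, which the \textbf{until} test ensures.
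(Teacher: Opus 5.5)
Your proof is correct and follows essentially the same route as the paper. Both rest on finiteness at a fixed depth, on the fact that each learned constraint is violated by the current trail while (by the invariant) no recorded constraint is, and on constraints never being forgotten. You count distinct learned clauses over a finite atom universe where the paper counts eliminated tableaux, which is the same idea, and you also check the invariant and the bound between learning events, which the paper leaves implicit.

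One small inaccuracy: an empty \emph{learn} is violated by every trail, including the empty one, and the \textbf{while} loop runs before the \textbf{until} test. So that loop has to be read as stopping at the empty trail; the \textbf{until} test alone does not ensure it. The paper's proof does not address this pseudocode edge case either.
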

\proof
First, note that at any depth limit and for any finite set of input clauses, there is a finite number of possible tableaux, all of which are of finite size.
Because all tableaux are of finite size, the routine will eventually either close the tableau, terminating immediately, or become stuck.
When stuck, the routine learns a constraint which, by construction, is violated by the current trail, and therefore prevents reaching at least this particular tableau again.
As constraints are never forgotten within the same depth limit, and there are a finite number of possible tableaux, termination is guaranteed as the solver eventually learns constraints eliminating all possible tableaux.
\qed
\begin{lemma}
\label{lemma:closed-not-violated}
Learned constraints are not violated by any closed tableau reachable in the proof calculus at a given depth limit.
\end{lemma}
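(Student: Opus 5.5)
We argue by induction on the order in which constraints are learned. The invariant is that every learned constraint $L$ is violated by no closed tableau reachable at the current depth limit. Here ``violated by a tableau'' means that every atom of $L$ occurs among the inferences used to build it. Equivalently, any tableau whose inference set contains $L$ cannot be extended, within the depth limit, to a closed tableau. The base case is trivial, since the constraint database starts empty. For the step, we consider the constraint learn computed when the search is stuck at branch $B$. It is the union of $I'$ (from \texttt{explain}) and one reason for each possible inference $j$ at $B$ in $T'$. We must show that no closed tableau $T^\ast$ reachable at the depth limit contains all of learn among its inferences.

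Suppose for contradiction that such a $T^\ast$ exists. Since $I' \subseteq$ learn, $T^\ast$ contains the start clause and the chain of extensions producing $B$. So $B$ occurs as a branch of $T^\ast$, and as $T^\ast$ is closed, $B$ is closed there by some inference $j^\ast$ applied at $B$. The same branch with the same literals exists in $T'$, and connection and depth constraints are local to the path. Hence $j^\ast$ is among the possible inferences $J$ at $B$ in $T'$ considered by the algorithm. We then split on why $j^\ast$ was rejected.

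\emph{Case 1: the calculus prevented $j^\ast$.} The reason $R_{j^\ast}$ from Definition~\ref{def:inference-reason} is a set of inferences which, applied on top of $T'$, make $j^\ast$ inapplicable at $B$. All of $R_{j^\ast}$ lies in $T^\ast$, and adding further inferences only strengthens the global substitution. Applicability of a connection is anti-monotone in the substitution: if $\sigma$ refines $\sigma'$ and the literals do not unify under $\sigma'$, they do not unify under $\sigma$. So $j^\ast$ is also inapplicable in $T^\ast$, which is a contradiction. This monotonicity claim is the key technical point. It relies on the variable naming convention (variables indexed by position, hence consistent across backtracking), so that ``the same inference'' makes sense in $T'$, $T$ and $T^\ast$. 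It also relies on the depth bound depending only on the path.

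\emph{Case 2: $j^\ast$ was rejected because of a learned clause $C \cup \{j^\ast\}$, or it led recursively to a stuck tableau with reason $R'$, and we set $R_{j^\ast} = R' \setminus \{j^\ast\}$.} In either subcase we have a previously learned constraint $K$ with $K \setminus \{j^\ast\} \subseteq$ learn. Then $T^\ast$ contains both $K \setminus \{j^\ast\}$ and $j^\ast$, so it violates $K$. This contradicts the induction hypothesis for $K$. For the recursive subcase, the inner reason $R'$ is itself obtained by the same construction, so we handle it by a nested induction on recursion depth, which is bounded by the depth limit. The main obstacle is making precise that ``the inference $j^\ast$ in $T^\ast$'' is literally the same atom as the one in $T'$, so that Case~1's monotonicity and Case~2's set containment apply. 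I would isolate this as a preliminary remark about position-indexed variables and clause renaming before running the induction.
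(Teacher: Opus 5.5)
Your proposal is correct and follows essentially the same argument as the paper: induction over the learned constraints, a closed tableau $T^\ast$ assumed to contain the constraint, and a case split on whether the inference at the stuck branch was blocked by the calculus (contradicting the fact that $T^\ast$ applies it) or by an earlier constraint (contradicting the induction hypothesis). Your version is more explicit than the paper's, since you take $j^\ast$ to be the inference that closes $B$ in $T^\ast$ and spell out the anti-monotonicity of unifiability under refinement of the substitution, both of which the paper leaves implicit. One small quibble: the nested recursion is well-founded because tableaux within the depth limit have bounded size, not because its depth is bounded by the depth limit itself.
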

\proof
By induction on the derivation of learned constraints.
Suppose a learned constraint $C$ is violated by a closed tableau $T^\star$.
By definition, $C$ is a subset of the rules required to construct $T^\star$.
Construct the intermediate tableau $T$ generated by $C$.
Take the next inference step $j$ from $T$ towards $T^\star$.
In Definition~\ref{def:tableau-reason}, $C$ is justified on the basis that all inferences, including $j$, from $T$ are impossible, either because the calculus prevents it (Definition~\ref{def:inference-reason}), or because another constraint $C'$ is violated.
By the induction hypothesis, applying $j$ does not violate any such $C'$, so $j$ must not be legal in the calculus, and we have a contradiction.
\qed

\begin{theorem}[Completeness]
If a closed tableau exists at a depth limit, it will be found by Algorithm~\ref{alg:procedure}.
\end{theorem}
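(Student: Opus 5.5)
We derive completeness from the Termination lemma together with Lemma~\ref{lemma:closed-not-violated}, arguing by contradiction. Fix the depth limit and suppose a closed tableau $T^\star$ exists at that limit. By the Termination lemma, Algorithm~\ref{alg:procedure} halts. The loop condition leaves only two exits: either $T$ is closed, or the learned constraint is empty. In the first case a closed tableau has been found and we are done. So the whole argument reduces to ruling out termination by learning the empty constraint while $T^\star$ exists.

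The key observation is that the empty constraint is violated by \emph{every} trail. Under the paper's definition, a constraint is violated when all of its atoms lie on the trail, and this holds vacuously for $\emptyset$. In particular, $\emptyset$ is violated by the trail of atoms describing the inferences that build $T^\star$. We therefore need to check that, whenever \texttt{record\_learned\_clause} records a constraint, that constraint is a learned constraint in the sense of Lemma~\ref{lemma:closed-not-violated}. For the \texttt{explain} and \texttt{compute\_reason} parts this follows from Definitions~\ref{def:inference-reason} and~\ref{def:tableau-reason}. For the conflict-clause part, the recorded set is $C$ taken from an existing learned clause $C\cup\{j\}$, which is exactly case~2 of Definition~\ref{def:tableau-reason}. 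The lemma then says that no learned constraint is violated by any closed tableau reachable at this depth limit. Since $T^\star$ is reachable, learning $\emptyset$ would contradict it.

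The main obstacle is checking that the constraints the algorithm actually records match the derivations that Lemma~\ref{lemma:closed-not-violated} inducts over. There is a particular risk in the final learned set $\bigcup_j R_j \cup I'$: it must be justified relative to \emph{all} inferences $j$ available at $B$ in $T'$, not just those attempted in $T$. I would handle this by observing that the loop in Algorithm~\ref{alg:procedure} ranges over every possible inference at $B$. Each one contributes either a calculus-failure reason or a conflict-derived reason, so the recorded constraint is precisely a reason for a stuck tableau.

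A secondary point concerns the case of an empty tableau, where the ``branch'' is the root and the inferences are start steps. There $I'$ is empty, and the $R_j$ are conflict clauses of the form $\{\start{C}\}\setminus\{\start{C}\}$. Learning $\emptyset$ then means every start clause has been refuted, and again Lemma~\ref{lemma:closed-not-violated} forbids this, because $T^\star$ begins with some start step. Combining the two cases, termination must occur with a closed tableau, which gives completeness at the given depth limit. Completeness overall then follows by iterative deepening: constraints are reset at each level, and some finite level admits the closed tableau.
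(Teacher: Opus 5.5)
Your proposal is correct and matches the paper's proof, which is simply ``by termination and Lemma~\ref{lemma:closed-not-violated}.'' Your observation that the empty constraint is vacuously violated by the trail of any closed tableau $T^\star$ is exactly the implicit step that lets those two facts combine, and your extra checks (that recorded constraints are the ones the lemma inducts over, and the empty-tableau case) only unpack that one-line argument.
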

\proof
By termination and Lemma~\ref{lemma:closed-not-violated}.
\qed

\section{Refining the Constraint Language}
\label{sec:constraint-refinements}
The simple constraint language introduced in Section~\ref{sec:constraints} is sufficiently expressive to block classes of similar tableaux, but is quite specific to a particular tableau and fails to block all the similar tableaux we might like.
It is also quite clunky to work with and would be difficult to compute inference failure reasons efficiently in practice: see Section~\ref{sec:computing-explanations}.

We therefore decompose each atom into multiple smaller atoms of two kinds: placing literals at positions in a tableau, and binding variables to terms.
However, \emph{mutatis mutandis} the search procedure remains the same, pushing one or more such atoms onto the trail for any one inference.
As well as being simpler to implement, constraints can be much stronger as they do not block only a particular derivation of a tableau, but any tableau having particular literals and variable bindings.

\begin{definition}[Refined Constraint Language]
A constraint remains a set of atoms.
However, each atom is either
\begin{enumerate}
\item $L@p$, a literal $L$ being placed at position $p$
\item $x \mapsto t$, a variable $x$ is bound to a term $t$ where $t$ itself may be a variable.
\end{enumerate}
\end{definition}
Each inference of the connection tableau calculus can be expressed as some combination of these.
Adding clauses in start and extension rules is done by placing their literals at the corresponding positions.
Connections of literals in extension and reduction rules are applied by computing the required variable bindings.
\begin{example}[Refined constraint learning]
Take the tableau in Figure~\ref{fig:running}.
We will explain why it is stuck in terms of the new constraint language.
Extending $Rxy$ with $\lnot Rvc$ is not possible because $y \mapsto d$, which is on the trail because $Qy$ was connected to $\lnot Qd$.
Similarly, extending $Rxy$ with $\lnot Rdv$ is not possible because $x \mapsto c$.
To finish the explanation, we have to say why $Rxy$ needs to be closed in the first place, but this is straightforward: $Rxy@3$.
The final explanation is therefore
\[ \{~ Rxy@3,~x \mapsto c,~y \mapsto d ~\}. \]
\end{example}

\subsection{No-Connection Atoms}
In the proposed language, explaining why an open branch cannot be \emph{reduced} can become overly specific.
\begin{example}[Explaining reduction failure]
Consider an open branch $\lnot Pc$ at the depth limit, with path literals $Px$, $Qc$, $Rcd$, $S$ and a substitution containing $x \mapsto d$.
Suppose the positions from root to leaf are $p_1 \ldots p_5$.
Clearly $\lnot Pc$ cannot be reduced.
In the case of $Px$, the constraint contains somewhat useful information: if $x$ were not bound to $d$, this branch could be reduced.
However, for all other path literals, the only useful information is that they cannot be connected with $\lnot Pc$, but this is not in the language, and we must learn
\[ \{~ \lnot Pc@p_5,~Px@p_1,x \mapsto d,~Qc@p_2,~Rcd@p_3,~S@p_4 ~\}. \]
\end{example}
This kind of situation occurs often in practice and needlessly specialises the learned constraint to a particular sequence of path literals.
To avoid this problem, a new kind of atom $p \not \sim q$ is introduced, representing that no connection can ever be made between the two positions $p$ and $q$, regardless of the substitution.
Whenever a literal is added to the tableau at position $q$, its path is checked to see, which literals at positions $p$ it could be reduced with, and where this is impossible, $p \not \sim q$ is added to the trail.
In the above example, the learned constraint would be
\[ \{~ \lnot Pc@p_5,~Px@p_1,~x \mapsto d,~p_2 \not \sim p_5,~p_3 \not \sim p_5,~p_4 \not \sim p_5 ~\}. \]
and more general, as it does not specify which literals are at $p_2$, $p_3$, or $p_4$.

\subsection{Disequations}
Classical refinements such as regularity and eliminating tautologies greatly improve the performance of connection systems~\cite{handbook}.
These are classically implemented by means of \emph{disequations}.
We can support this naturally in the constraint language by adding disequation atoms of the form $s \neq t$.
When a disequation is falsified, backtracking can be induced by giving the disequation and the variable bindings required to falsify it as a learned constraint.

\section{Implementation and Experimental Validation}
\label{sec:implementation}
We implemented a prototype system \hopcop\footnote{\url{https://github.com/MichaelRawson/hopcop}, commit \texttt{a4a0f66}} to experiment with constraint learning.
So far, the implementation is imperative --- although we suspect a lean Prolog implementation may be possible via \texttt{assert/1}~\cite{iso-prolog} --- and owes much to implementation techniques found in the Bare Metal Tableaux Prover~\cite{bare-metal-tableaux} and \meancop~\cite{meancop}.
\hopcop implements the clausal connection tableau calculus, without cuts~\cite{restricted-backtracking} and starting from clauses derived from the conjecture.
With the obvious exception of constraint learning, \hopcop's search routine resembles that of \meancop, if the \meancop flags \verb|--conj --nopaths| are set\footnote{starting with the annotated conjecture clauses (\texttt{--conj}) and preventing input clauses reordering (\texttt{--nopaths})}.
\meancop also implements a lemma mechanism~\cite{handbook} that \hopcop so far lacks.
In Sections~\ref{sec:constraint-management} and \ref{sec:computing-explanations} we highlight two aspects of the implementation that may be of interest to implementers of similar systems.

\subsection{Constraint Management and Detecting Conflicts}
\label{sec:constraint-management}
A large number of constraints are learned during search, millions with a long enough time limit.
However, this seems to be less dramatic than in other settings such as SAT solving, and we did not find any benefit from attempting to garbage-collect old constraints, so \hopcop retains \emph{all} constraints it learns.

It is still necessary to efficiently find conflicts among this large set when adding atoms to the trail.
This is done by a 1-watched-literal scheme~\cite{watched-literal}: there is no need for the 2-watched-literal scheme popular in SAT solving, as all atoms have the same polarity and unit propagation is of little use.
More than one conflict may be found when adding an atom to the trail: it is worth trying to choose conflicts that minimise the resulting learned constraint.
\hopcop greedily chooses the conflict that adds the fewest atoms to the constraint learned so far.

\subsection{Computing Explanations}
\label{sec:computing-explanations}
To explain why an inference that connects two literals is not possible, a subset of the current substitution must be computed that prevents their unification.
It is no doubt possible to construct a complex variable-tracking scheme to do this quickly, but for our application, the following procedure is acceptably fast.
\begin{enumerate}
\item Unify the two literals using a new scratch substitution $\tau$. As the inference was possible with an empty substitution (but not with $\sigma$), this must succeed.
\item Record the current state of $\tau$, say $\tau_0$.
\item For every binding $x \mapsto t$ in $\sigma$ we:
\begin{enumerate}
	\item Try to unify $x$ and $t$ in $\tau$. If this succeeds, continue the loop at step 3.
	\item Reset $\tau$ to $\tau_0$ and try to unify $x$ and $t$ again. If successful, go to step 2.
	\item Exit the loop on failure, retaining $x \mapsto t$ in $\tau$.
\end{enumerate}
\end{enumerate}
After this procedure, $\tau$ should contain the necessary subset of $\sigma$.
A similar routine can be used to determine why a disequation is falsified.

\subsection{Experiments}
\begin{table}
\caption{The number of extension steps tried in order to determine that there is no closed tableau of a certain depth on \texttt{PUZ005-1}. A proof exists at depth 8.}
\label{tab:backtracking}
\centering
\setlength{\tabcolsep}{1em}
\begin{tabular}{r r r r r r r r r}
\toprule
depth & 1 & 2 & 3 & 4 & 5 & 6 & 7\\
\midrule
\meancop & 1 & 4 & 24 & 108 & 535 & 9,963 & 6,445,008\\
\hopcop & 1 & 4 & 89 & 495 & 2,309 & 10,066 & 48,517\\
\bottomrule
\end{tabular}
\end{table}
\noindent
Having gone to some effort to reduce backtracking in theory, we wish to know whether this also helps in practice.
We first manually inspected the behaviour of both \meancop and \hopcop on problems taken from the \texttt{PUZ} domain of TPTP.
\meancop reports the number of successfully applied extension steps required to exhaust each iterative deepening level, so we instrumented \hopcop to do the same.
Table~\ref{tab:backtracking} shows the number of steps required for \texttt{PUZ005-1}\footnote{the first CNF problem of moderate difficulty in the \texttt{PUZ} domain}.
At lower iterative deepening levels, the result is mixed due to differences in search decisions and \meancop's lemma rule, but \hopcop typically extends a clear lead at higher levels.
\meancop maintains a much higher rate of inference: our implementation is not highly optimised, but we suspect that the overhead of maintaining the learned constraints would cause significant inferences-per-second overhead compared to \meancop even if it were.

\hopcop also ran head-to-head against \meancop on several popular first-order benchmark sets: FOF and CNF problems from TPTP version 9.0.0~\cite{tptp}, the MPTP challenge problems~\cite{mptp} in \emph{bushy} and \emph{chainy} variants, and the \emph{Miz40} ATP-minimised set~\cite{rlcop}, of which \emph{M2k} is a subset.
Both systems were given a time limit of 10 seconds per problem and \meancop was configured with \verb|--conj --nopaths| (to better match \hopcop, see above).
We also ran \meancop with the additional \verb|--cut| argument, which we call \meancopcut: this renders \meancop incomplete in exchange for significantly reduced backtracking.

We do not wish to claim anything about the relative strength of the systems, only that the data are consistent with the hypothesis that the reduction in backtracking achieved overcomes the overhead in terms of inferences-per-second speed.
Table~\ref{tab:results} shows the number of solved problems.
Readers may also be interested in the very thorough experimental data in Färber's discussion of various backtracking schemes~\cite{meancop}.

\begin{table}
\caption{Theorems proved in 10 seconds by \hopcop, \meancop, and \meancopcut on various benchmark sets.}
\label{tab:results}
\centering
\setlength{\tabcolsep}{1em}
\begin{tabular}{r r r r r r}
\toprule
& M2k & Miz40 & bushy & chainy & TPTP \\
\midrule
	\meancop & 795 & 7,592 & 480 & 157 & 3,578 \\
	\meancopcut & 878 & 9,748 & 562 & 337 & 3,283 \\
	\hopcop & 1,050 & 13,040 & 589 & 203 & 4,026 \\
\bottomrule
\end{tabular}
\end{table}

\section{Related Work}
The most directly related work is the various fixed restrictions on backtracking in connection tableau~\cite{restricted-backtracking,meancop}: these are by nature incomplete, but effective.
Older techniques such as \emph{failure caching}~\cite{failure-caching} also achieve a reduction in backtracking and remain complete, but with different mechanisms.
The Go\'eland tableau system exchanges substitution information~\cite{goeland} between concurrent branch explorations: this is likely to reduce backtracking, but again with a different mechanism.
Encodings of connection-driven search into SAT or SMT, such as ChewTPTP~\cite{chewtptp}, are complete and will also learn constraints during proof search, but behave very differently and are mostly encoded upfront.

We ourselves have tried various encodings~\cite{spanning} of connection methods into SAT and SMT via \emph{user propagation}~\cite{z3-up,ipasir-up}.
We found that SAT/SMT solvers, even with a lazy user-propagator encoding, are not a good match for this kind of proof search, as their internal heuristics have no knowledge of the current state of the tableau.
The solver will, for instance, very happily decide or propagate variables that encode some sub-tableau completely disconnected from the current state.
Refined encodings such as in Section~\ref{sec:constraint-refinements} improve the encoding's performance, but allow the solver to partially apply inferences: we are not sure of the performance merits of this, but it is highly confusing.

The first author has also investigated \emph{generating} SAT clauses from instances of clauses found in connection tableau during search~\cite{satcop}, as a kind of instance-based method~\cite{instance-methods}.
When the set of SAT clauses becomes unsatisfiable, it shows that the input clause set was also unsatisfiable.
While an extension of this instance-generation approach can be used to influence the connection tableau search, it is not the core of the method, unlike the constraint learning approach here.
The two are largely orthogonal and could be combined profitably.

Other work that is highly related but may be confusing is the concept of \emph{backjumping} in modal and other tableau~\cite{backjumping-modal}: the concept and its origins appears to be similar, but it is used to \emph{avoid} logical conflicts on a branch when looking for a model, rather than to avoid getting stuck when looking for a closed tableau.
The \textsf{MeTTeL}\textsuperscript{2} tableau prover generator~\cite{mettel2} has generic support for a similar kind of backjumping, which it calls \emph{conflict-directed backjumping}.

\section{Outlook}
We have integrated a constraint learning approach to guide search and reduce backtracking into a prototype first-order connection theorem prover \hopcop and observe that it reduces the search space significantly, which translates into practical performance.
A trade-off is memory use: constraints have to be kept somewhere. This was not excessive in our experience, but may prevent running \hopcop on your iPod\textsuperscript{\textregistered}~\cite{pocket-reasoner}.

It is likely that other non-confluent tableau calculi may benefit from such an approach.
We would also be interested in the intersection of this kind of learning with the \emph{machine} kind of learning: constraint learning could reduce the options available to a learned heuristic, while a good learned heuristic might rapidly learn useful constraints.

There are some areas for further improvement to the technique.
The most irritating are the explicit positions present in the constraint language, which limits the application of learned constraints.
Eliminating this would require detecting conflicts modulo structurally equivalent positions, which we suspect may be difficult to do efficiently.
A future \emph{lean} implementation may help with this, perhaps based on the recent realisation that \emph{backjumping is exception handling}~\cite{backjumping-exception}.

\begin{credits}
\subsubsection{\ackname}
We are grateful to Michael F\"arber in particular for his \meancop tool and stimulating discussions on this and related topics. This research was funded in whole or in part by the  ERC Consolidator Grant ARTIST 101002685, the ERC Proof of Concept Grant LEARN 101213411, the TU Wien Doctoral College SecInt, the FWF SpyCoDe Grant 10.55776/F85,  the WWTF grant ForSmart   10.47379/ICT22007, and the Amazon Research Award 2023 QuAT. 

\subsubsection{\discintname}
The authors have no competing interests to declare that are relevant to the content of this article.
\end{credits}
\bibliographystyle{splncs04}
\bibliography{TABLEAUX25}
\end{document}